\patchcmd{\ALG@doentity}{\item[]\nointerlineskip}{}{}{}
\algrenewcommand\algorithmicindent{1em}
\renewcommand{\ALG@name}{Protocol}
\algrenewcommand\algorithmicdo{}
\newcommand{\defref}[1]{Definition~\ref{#1}}
\newcommand{\asmref}[1]{Assumption~\ref{#1}}
\newcommand{\propref}[1]{Proposition~\ref{#1}}
\newcommand{\lemref}[1]{Lemma~\ref{#1}}
\newcommand{\thmref}[1]{Theorem~\ref{#1}}
\newcommand{\secref}[1]{Section~\ref{#1}}
\newcommand{\figref}[1]{Fig.~\ref{#1}}
\newcommand{\proref}[1]{Protocol~\ref{#1}}
\newcommand{\N}{\mathbb{N}}
\newcommand{\Z}{\mathbb{Z}}
\newcommand{\R}{\mathbb{R}}
\newcommand{\M}{\mathcal{M}}
\newcommand{\C}{\mathcal{C}}
\newcommand{\Enc}{\mathsf{Enc}}
\newcommand{\Dec}{\mathsf{Dec}}
\newcommand{\Share}{\mathsf{Share}}
\newcommand{\Reconst}{\mathsf{Reconst}}
\newcommand{\pk}{\mathsf{pk}}
\newcommand{\sk}{\mathsf{sk}}
\newcommand{\ct}{\mathsf{ct}}
\newcommand{\view}[1]{\mathsf{view}^{#1}}
\newcommand{\Sim}{\mathsf{Sim}}
\newcommand{\xx}{\mathrm{x}}
\newcommand{\zz}{\mathrm{z}}
\newcommand{\uu}{\mathrm{u}}
\newcommand{\vv}{\mathrm{v}}
\newcommand{\ww}{\mathrm{w}}
\newcommand{\sss}{\mathrm{s}}
\newcommand{\trm}{\mathrm{t}}
\newcommand{\xxsf}{\mathsf{x}}
\newcommand{\yysf}{\mathsf{y}}
\newcommand{\mmsf}{\mathsf{m}}
\newcommand{\rrsf}{\mathsf{r}}
\newcommand{\Nin}{N^{\mathrm{in}}}
\newcommand{\Nout}{N^{\mathrm{out}}}
\newtheorem{definition}{Definition}
\newtheorem{assumption}{Assumption}
\newtheorem{proposition}{Proposition}
\newtheorem{lemma}{Lemma}
\newtheorem{theorem}{Theorem}
\newtheorem{remark}{Remark}
\title{\LARGE \bf
    Faithful and Privacy-Preserving Implementation of Average Consensus$^\ast$
}
\author{Kaoru Teranishi$^{1,2}$, Kiminao Kogiso$^{3}$, and Takashi Tanaka$^{4}$% <-this % stops a space
\thanks{$^{\ast}$This work was supported by JSPS Grant-in-Aid for JSPS Fellows Grant Number JP21J22442 and for JSPS KAKENHI Grant Number JP23K22779.}% <-this % stops a space
\thanks{$^{1}$School of Aeronautics and Astronautics,
        Purdue University, West Lafayette, IN 47907, USA
        {\tt\small kteranis@purdue.edu}}%
\thanks{$^{2}$Japan Society for the Promotion of Science, Chiyoda, Tokyo, Japan}%
\thanks{$^{3}$Department of Mechanical and Intelligent Systems Engineering,
        The University of Electro-Communications, 1-5-1 Chofugaoka, Chofu, Tokyo 182-8585, Japan
        {\tt\small kogiso@uec.ac.jp}}%
\thanks{$^{4}$School of Aeronautics and Astronautics, Elmore Family School of Electrical and Computer Engineering,
        Purdue University, West Lafayette, IN 47907, USA
        {\tt\small tanaka16@purdue.edu}}%
}
\begin{document}

\thispagestyle{empty}
\hspace{-4.5mm}
\fbox{
\begin{minipage}{\textwidth-5mm}\scriptsize
© 20XX IEEE.
Personal use of this material is permitted.
Permission from IEEE must be obtained for all other uses, in any current or future media, including reprinting/republishing this material for advertising or promotional purposes, creating new collective works, for resale or redistribution to servers or lists, or reuse of any copyrighted component of this work in other works.
\end{minipage}
}
\newpage
\setcounter{page}{0}

\maketitle
\thispagestyle{empty}
\pagestyle{empty}

\begin{abstract}
    We propose a protocol based on mechanism design theory and encrypted control to solve average consensus problems among rational and strategic agents while preserving their privacy.
    The proposed protocol provides a mechanism that incentivizes the agents to faithfully implement the intended behavior specified in the protocol.
    Furthermore, the protocol runs over encrypted data using homomorphic encryption and secret sharing to protect the privacy of agents.
    We also analyze the security of the proposed protocol using a simulation paradigm in secure multi-party computation.
    The proposed protocol demonstrates that mechanism design and encrypted control can complement each other to achieve security under rational adversaries.
\end{abstract}

\section{Introduction}
\label{sec:introduction}

Average consensus is a fundamental problem in multi-agent systems to reach an agreement on the average of agents' states.
It arises in numerous applications, such as rendezvous of mobile robots, data fusion in sensor networks, and distributed optimization~\cite{Ren2005-td,Olfati-Saber2007-ny,Kia2019-lq}.
This problem is usually solved by exchanging information among agents and updating their states locally based on the information when they are cooperative.
However, a rational and strategic agent may be incentivized to manipulate the average consensus algorithm (e.g., by misreporting information) to drive an outcome to its own benefit.
Furthermore, adversarial agents may learn the secrets of honest agents through information exchanges, thereby compromising their privacy.

We address these challenges by designing a protocol that combines mechanism design and encrypted control.
Mechanism design theory deals with the design of rules to achieve preferable social outcomes in the presence of strategic agents~\cite{Shoham2008-kd}.
In classical mechanism design, a social planner asks agents to report their private information and announces a social decision and tax computed using the collected information.
In contrast, distributed mechanism design considers determining the outcome in a distributed manner~\cite{Parkes2004-zk}.
Previous studies~\cite{Tanaka2013-bl,Tanaka2017-zg} have shown that some distributed optimization and control algorithms can be faithfully implemented using distributed mechanisms.

Encrypted control is a framework that applies cryptographic primitives to decision-making in dynamical systems~\cite{Schulze_Darup2021-qq}.
Within this framework, previous studies considered consensus control~\cite{Kishida2018-ws,Ruan2019-gz}, formation control~\cite{Marcantoni2023-lx}, and cooperative control~\cite{Schulze_Darup2019-kw,Alexandru2019-dn} using homomorphic encryption and secret sharing.
These cryptographic primitives enable the computation of sensitive information in an encrypted form.
Thus, encrypted control is effective in mitigating privacy compromises in multi-agent systems.

Although mechanism design and encrypted control have been developed individually so far, integrating these methodologies would produce a promising approach to simultaneously achieve both faithfulness and privacy in cooperative decision-making.
Traditional mechanism design based on the \emph{revelation principle}~\cite{Shoham2008-kd} to attain faithful implementation requires agents to disclose their private information to a social planner.
This process is clearly undesirable from a privacy perspective and will be improved by running the computation of mechanisms over encrypted data.
On the other hand, existing encrypted controls assume semi-honest agents who may attempt to learn private information from received messages but do not deviate from a protocol, thus failing to address strategic manipulation by agents.
A mechanism can dissuade strategic agents from manipulating a protocol, which is expected to enhance the achievable security of encrypted controls.

The main contribution of this study lies in clarifying the synergy between mechanism design and encrypted control.
We demonstrate how incentivization by mechanism design and secure computation using cryptographic primitives can achieve average consensus in the presence of rational agents rather than semi-honest agents.
Specifically, our contributions are listed as follows.
1) We propose a privacy-preserving protocol to provide a mechanism that implements an average consensus by adopting the algorithm in~\cite{Tanaka2013-bl} tailored for multi-party computation.
In contrast to the previous algorithm, the mechanism computation in the proposed protocol is distributedly performed by the agents instead of a single leader.
2) Building on previous studies~\cite{Tanaka2013-bl,Tanaka2017-zg}, we show that the agents do not deviate from the intended behavior even though their private information reports are encrypted.
3) We also demonstrate that the proposed protocol fulfills a standard privacy requirement for secure multi-party computation through simulation-based proofs~\cite{Lindell2016-gz,Goldreich2009-ct}.

The remainder of this paper is organized as follows.
\secref{sec:problem} describes a problem setting.
\secref{sec:preliminaries} introduces definitions of mechanism design and secure multi-party computation.
\secref{sec:consensus} presents a solution to the problem and demonstrates its security under semi-honest adversaries.
\secref{sec:mechanism} proposes a protocol providing a mechanism that implements the solution under rational adversaries.
\secref{sec:examples} illustrates the effectiveness of the proposed protocol through numerical simulations.
\secref{sec:conclusions} describes the conclusions of this study.

\section{Problem Setting}
\label{sec:problem}

In this study, we consider the multi-agent system that consists of $N$ agents shown in \figref{fig:mas}.
Each agent is directly connected to the supervisor, whose role will be explained later.
The network topology of the agents is described by a strongly connected and balanced digraph $G = (V, E)$ with a vertex set $V = \{1, \dots, N\}$ and edge set $E \subset V \times V$.
A weighted adjacency matrix of $G$ is $A = [a_{ij}] \in \R^{N \times N}$, where $a_{ij} > 0$ if $(i, j) \in E$ and $a_{ij} = 0$ otherwise.
Define the input and output neighbors of agent~$i$ as $\Nin_i \coloneqq \{ j \in V \mid (i, j) \in E \}$ and $\Nout_i \coloneqq \{ j \in V \mid (j, i) \in E \}$, respectively.
Here, $(i, j) \in E$ means that agent~$j$ can send a message to agent~$i$.
The dynamics of agent~$i$ is given by
\begin{equation}
    \xx_i(k + 1) = \xx_i(k) + \uu_i(k),
    \label{eq:agent}
\end{equation}
where $k \in \N_0 \coloneqq \{ 0, 1, 2, \dots \}$ is the time index, $\xx_i(k) \in \R$ is the state, $\uu_i(k) \in \R$ is the input, and the initial state is $\xx_i(0) = \xx_{i, 0}$.
A common average-consensus control for \eqref{eq:agent} is
\begin{equation}
    \uu_i(k) = \ww_{ii} \xx_i(k) + \sum{}_{j \in \Nin_i} \ww_{ij} \xx_j(k),
    \label{eq:control}
\end{equation}
where $\ww_{ij} = \epsilon a_{ij}$ for $i \ne j$, $\ww_{ii} = -\sum_{j \in \Nin_i} \ww_{ij}$, and $\epsilon \in (0, 1 / \max_i \sum_{j \ne i} a_{ij})$.
Applying \eqref{eq:control} to \eqref{eq:agent}, the system achieves average consensus, i.e., $\lim_{k \to \infty} \xx_i(k) = \frac{1}{N} \sum_{j=1}^N \xx_{j,0}$ for all $i \in V$, if $G$ is strongly connected and balanced~\cite{Olfati-Saber2007-ny}.

\begin{figure}[t]
    \centering
    \includegraphics[scale=1]{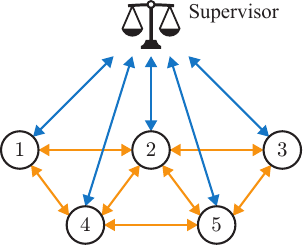}
    \caption{Multi-agent system with a supervisor ($N = 5$).}
    \label{fig:mas}
    \vspace{-5mm}
\end{figure}

The goal of this study is to design a secure multi-party computation protocol that achieves average consensus under the following computation and security models.

\paragraph{Computation model}
We employ a preprocessing model~\cite{Damgard2012-jh} to simplify a protocol and to improve its efficiency.
A protocol in this model consists of offline and online phases.
In the offline phase, agents receive auxiliary inputs from a third party independently of their inputs.
In the online phase, the agents compute outputs using the auxiliary inputs without the third party.
The supervisor in \figref{fig:mas} plays the role of both the third party in this model and a social planner in mechanism design.
More precisely, the supervisor distributes random numbers to the agents in the offline phase and verifies tax payments once the protocol is terminated.

\paragraph{Security model}
We propose a secure protocol under rational adversaries.
Unlike a semi-honest adversary, a rational adversary does not only attempt to compromise the privacy of honest agents but also acts strategically to minimize its own cost.
Such an adversary may rationally deviate from the designated protocol if the protocol is not \emph{incentive compatible} (defined below).
The proposed protocol aims to prevent rational adversaries from learning information beyond their inputs and outputs while attaining incentive compatibility.

\begin{remark}
    The supervisor is a third party independent from the multi-agent system.
    It provides the moderation and coordination service for the agents to faithfully and privately realize an average consensus task.
    In this scenario, the tax payment can be thought of as the fee that each agent is asked to pay for the service, which is ensured by contract. 
\end{remark}

\section{Mechanism Design and Multi-party Computation}
\label{sec:preliminaries}

\subsection{Mechanism design}

The notation used in this section is consistent with~\cite{Tanaka2017-zg}.
Suppose each agent~$i \in \{1, \dots, N\}$ has private information (called type) $\theta_i \in \Theta_i$, which defines its cost $u_i(d(\theta), t(\theta); \theta_i) = v_i(d(\theta); \theta_i) + t_i(\theta)$ for a decision rule $d: \Theta \to X$ and transfer rule $t: \Theta \to \R^N$, where $t(\theta) = (t_1(\theta), \dots, t_N(\theta))$, $\theta = (\theta_1, \dots, \theta_N) \in \Theta = \Theta_1 \times \cdots \times \Theta_N$, $v_i: X \to \R$, and $X$ is the set of feasible outcomes.
The pair $f = (d, t): \Theta \to X \times \R^N$ is called a social choice function.

Consider that each agent~$i$ reports a message $s_i(\theta_i) \in \Sigma_i$ to a social planner using a message function $s_i: \Theta_i \to \Sigma_i$ based on its type while attempting to minimize its own cost.
The social planner obtains the value of social choice function from the messages using an outcome function $g: \Sigma \to X \times \R^N$ such that $g \circ s(\theta) = f(\theta)$, where $\Sigma = \Sigma_1 \times \cdots \times \Sigma_N$ and $s(\theta) = (s_1(\theta_1), \dots, s_N(\theta_N))$.
Under this setting, the goal of the social planner is to design a mechanism consisting of $g$, $\Sigma$, and $s$ that implements $f$.

\begin{definition}[Incentive compatibility~\cite{Tanaka2017-zg}]
    A mechanism $M = (g, \Sigma, s)$ is \emph{incentive compatible} or \emph{implements} a social choice function $f$ in ex-post Nash equilibria if $g \circ s = f$ and $u_i(g(s_i(\theta_i), s_{-i}(\theta_{-i})); \theta_i) \le u_i(g(\sigma_i, s_{-i}(\theta_{-i})); \theta_i)$ hold for every $i \in \{1, \dots, N\}$, for all $\sigma_i \in \Sigma_i$, and for all $\theta \in \Theta$, where $\theta_{-i} \coloneqq (\theta_1, \dots, \theta_{i-1}, \theta_{i+1}, \dots, \theta_N)$, and $s_{-i}$ is defined in the same manner as $\theta_{-i}$.
\end{definition}

If a social choice function is implemented in ex-post Nash equilibria, no agent gains benefit by adopting a strategy $\sigma_i \ne s_i(\theta_i)$.
Thus, no agent is incentivized to deviate from the equilibrium provided all other agents play the equilibrium strategy.
Meanwhile, in the nature of algorithmic mechanism design, the outcome $(d(\theta), t(\theta))$ is sometimes given by a solution to an optimization algorithm.
In that case, the exact optimal outcome cannot be obtained in a finite iteration $n$ and must be approximated by a near-optimal one.
Then, a mechanism cannot be incentive compatible in general~\cite{Nisan2000-ri}.
To avoid this difficulty, the following notion is introduced.

\begin{definition}[Asymptotically incentive compatibility~\cite{Tanaka2017-zg}]
\label{def:aic}
    For every $n \in \N$, let $M^n = (g^n, \Sigma^n, s^n)$ be a mechanism.
    A sequence of mechanisms $\{M^n\}_{n \in \N}$ is \emph{asymptotically incentive compatible} or \emph{asymptotically implements} a social choice function $f$ in ex-post Nash equilibria if, for every $\varepsilon > 0$, there exists $n_0 \in \N$ such that $\lim_{n \to \infty} g^n \circ s^n = f$ and $u_i(g^n(s^n_i(\theta_i), s^n_{-i}(\theta_{-i})); \theta_i) \le u_i(g^n(\sigma^n_i, s^n_{-i}(\theta_{-i})); \theta_i) + \varepsilon$ hold for all $n \ge n_0$, for every $i \in \{1, \dots, N\}$, for all $\sigma_i^n \in \Sigma_i^n$, and for all $\theta \in \Theta$.
\end{definition}

This definition implies that if a mechanism given by an iterative algorithm is asymptotically incentive compatible, it converges to be incentive compatible as the number of iterations $n$ goes to infinity.
Moreover, even when $n$ is finite, the decrease of cost by manipulating messages is bounded by any small $\varepsilon$ for all agents if $n$ is sufficiently large.

\subsection{Secure multi-party computation}

Let $F_i : \{0, 1\}^\ast \times \cdots \times \{0, 1\}^\ast \to \{0, 1\}^\ast$ be a deterministic function such that $\yysf_i = F_i(\xxsf_1, \dots, \xxsf_N)$, where $\xxsf_i$ and $\yysf_i$ are private inputs and outputs of agent~$i$, and $\{0, 1\}^\ast$ is the set of binary sequences of any length.
Secure multi-party computation aims to compute a functionality $F = (F_1, \dots, F_N)$ on $(\xxsf_1, \dots, \xxsf_N)$ without revealing any information other than $\xxsf_i$ and $\yysf_i$ for each agent~$i$.
In an ideal world, the agents can achieve the objective by sending their inputs $\xxsf_i$ to a trusted third party that computes and returns $\yysf_i$ to each agent~$i$.
By contrast, in the real world, they jointly compute $F$ with communication because there is no trusted third party.
From this perspective, if all information that adversaries can obtain in the real world is also obtained in the ideal world, a protocol in the real world is considered at least as secure as one in the ideal world.

The simulation paradigm is a standard approach to formally define such security in multi-party computation.
In this approach, a protocol is considered secure under semi-honest adversaries if the \emph{view} of the adversaries are computationally indistinguishable from the information computed by their inputs and outputs.
Here, the view of agent~$i$ during an execution of a protocol $\Pi$ on a security parameter $\lambda$ and inputs $(\xxsf_1, \dots, \xxsf_N)$, denoted $\view{\Pi}_i(\lambda, \xxsf_1, \dots, \xxsf_N)$, is a tuple of $\xxsf_i$, internal random coins $\rrsf_i$ of the agent, and messages $\mmsf_i$ it has received~\cite{Lindell2016-gz}.
Note that the computational indistinguishability of two families of random variables implies that no polynomial-time algorithm can distinguish them~\cite{Lindell2016-gz}.

\begin{definition}[Secure multi-party computation~\cite{Goldreich2009-ct}]
\label{def:smpc}
    Let $F$ be a functionality that takes $(\xxsf_1, \dots, \xxsf_N)$ as input and outputs $(\yysf_1, \dots, \yysf_N)$.
    A protocol $\Pi$ \emph{$h$-privately computes} $F$ in the presence of semi-honest adversaries if there exist probabilistic polynomial-time algorithms (called simulators) $\Sim$ such that two families of random variables $\{ \Sim(1^\lambda, C, \{ \xxsf_i, \yysf_i \mid i \in C \}) \}_{\lambda \in \N, \xxsf_1, \dots, \xxsf_N \in \{0, 1\}^\ast}$ and $\{ \{ \view{\Pi}_i(\lambda, \xxsf_1, \dots, \xxsf_N) \mid i \in C \} \}_{\lambda \in \N, \xxsf_1, \dots, \xxsf_N \in \{0, 1\}^\ast}$ are computationally indistinguishable for every $C \subset \{1, \allowbreak \dots, N\}$ satisfying $|C| < h$, where $\xxsf_i$ are of equal length for all $i$.
\end{definition}

Intuitively, the definition implies that information obtained by a coalition of adversaries through the execution of a secure protocol can be simulated from their inputs and outputs.
In other words, the adversaries can learn nothing except for information given by their inputs and outputs.

\subsection{Homomorphic encryption and secret sharing}
\label{sec:he_ss}

Let $\M$ be a plaintext space, and $\C$ be a ciphertext space.
Additively homomorphic encryption, such as learning with errors (LWE) encryption, is an encryption scheme that allows addition over encrypted data.
That is, there exists a binary operation $\oplus: \C \times \C \to \C$ such that $\Dec(\sk, \Enc(\pk, m_1) \oplus \Enc(\pk, m_2)) = m_1 + m_2 \in \M$ for all $m_1, m_2 \in \M$, where $\Enc$ is an encryption algorithm, $\Dec$ is a decryption algorithm, $\pk$ is a public key, and $\sk$ is a secret key.
With the homomorphic addition $\oplus$, a binary operation $\odot: (\ct, n) \mapsto \ct \oplus \dots \oplus \ct$ is defined for $\ct = \Enc(\pk, m)$ and $n \in \N$ such that $\Dec(\sk, \ct \odot n) = mn \in \M$.
Furthermore, we assume that additively homomorphic encryption satisfies semantic security~\cite{Goldreich2009-ct}.
This implies that the encryption of a plaintext gives no information on the plaintext to a polynomial-time adversary.

Additive secret sharing over $\Z_q \coloneqq \{0, 1, \allowbreak \dots, q - 1\}$ is a cryptographic technique to store a secret distributedly.
It splits a message $m \in \Z_q$ into $n$ shares by a share generation algorithm $(\sss_1, \dots, \sss_n) \gets \Share(m, n)$, where $\sss_1, \dots, \sss_{n - 1}$ are sampled from $\Z_q$ uniformly at random, and $\sss_n = m - \sum_{i = 1}^{n - 1} \sss_i \bmod q$.
The message can be recovered by a reconstruction algorithm as $\Reconst(\sss_1, \dots, \sss_n) = \sum_{i = 1}^n \sss_i \bmod q$.
Correctness of additive secret sharing is obvious, namely $\Reconst(\Share(m, n)) = m$ for all $m \in \Z_q$ and $n \ge 2$.
Moreover, any $n - 1$ shares are uniformly at random over $\Z_q$ and independent of $m$ by construction.

\section{Privacy-preserving Average Consensus}
\label{sec:consensus}

Using the cryptographic tools in \secref{sec:he_ss}, we present \proref{pro:consensus} that computes
\begin{equation}
    \vv_i(k) \coloneqq \sum{}_{j \in \Nin_i} \vv_{ij}(k), \quad \vv_{ij}(k) \coloneqq \ww_{ij} \xx_j(k)
    \label{eq:aggregation}
\end{equation}
over encrypted data.
Here, we focus on securely computing the second term in \eqref{eq:control} because agent~$i$ can locally compute the first term.
For the sake of simplicity, assume that $\ww_{ij}$ are rational numbers for all $i, j \in V$, and the plaintext space is $\M = \Z_q$ with a large prime $q$.
These assumptions are reasonable in practice because a real-valued weight can be approximated by a rational number with any desired precision to inherit the stability and performance of the original control, and $q$ can be chosen freely.

\begin{figure}[!t]
    \begin{algorithm}[H]
        \caption{Privacy-preserving average consensus}
        \label{pro:consensus}
        \begin{algorithmic}[1]
            \Require $n$, $G$, $\Delta_\ww$, $\Delta_\xx$, $\ww_{ii}$, $\ww_{ij}$, $\xx_{i,0}$, $\pk_i$, $\sk_i$
            \Ensure $\xx_i(n)$
            \State Send $\{ \ct_{\sss, ji}(k) \}_{k=0,\, j \in \Nout_i}^{n-1}$ to agent~$i$ \Comment{Supervisor}
            \State Broadcast $\{ \ct_{\ww, ij} \}_{j \in \Nin_i}$ to all agents \Comment{Agent~$i$}
            \ForAll{$k = 0, \dots, n - 1$} \Comment{Agent~$i$}
                \State Send $\ct_{\vv, ji}(k)$ to $j \in \Nout_i$
                \State $\vv_i(k) \gets \Delta [ \Reconst( (\Dec(\sk_i, \ct_{\vv, ij}(k)))_{j \in \Nin_i} ) ]_q$
                \State $\uu_i(k) \gets \ww_{ii} \xx_i(k) + \vv_i(k)$, $\xx_i(k + 1) \gets \xx_i(k) + \uu_i(k)$
            \EndFor
        \end{algorithmic}
    \end{algorithm}
    \vspace{-10mm}
\end{figure}

In the offline phase, the supervisor generates and sends $\{ \ct_{\sss, ji}(k) \gets \Enc(\pk_j, \sss_{ji}(k)) \}_{k = 0, \, j \in \Nout_i}^{n - 1}$ to agent~$i$, where $(\sss_{ij}(k))_{j \in \Nin_i} \gets \Share(0, |\Nin_i|)$ for every $k \in \N_0$.
Simultaneously, agent~$i$ encrypts its weights as $\ct_{\ww, ij} \gets \Enc(\pk_i, \tilde{\ww}_{ij})$ and broadcasts $\{\ct_{\ww, ij}\}_{j \in \Nin_i}$ to all agents, where $\tilde{\ww}_{ij} \coloneqq \bar{\ww}_{ij} \bmod q$, $\bar{\ww}_{ij} \coloneqq \Delta_\ww^{-1} \ww_{ij}$, and $\Delta_\ww > 0$.
Note that since $\ww_{ij}$ are rational numbers, there exists $\Delta_\ww$ such that $\bar{\ww}_{ij} \in \Z$ for all $i \in V$ and for all $j \in \Nin_i$.
Additionally, the broadcast process can be performed by message passing via the supervisor.

In the online phase, agent~$i$ encodes its state as $\tilde{\xx}_i(k) \coloneqq \bar{\xx}_i(k) \bmod q$, where $\bar{\xx}_i(k) \coloneqq \lfloor \Delta_\xx^{-1} \xx_i(k) \rceil$, $\Delta_\xx > 0$, and $\lfloor \cdot \rceil$ represents a rounding of a real number into the nearest integer.
It then computes and sends $\ct_{\vv, ji}(k) = \ct_{\ww, ji} \odot \tilde{\xx}_i(k) \oplus \ct_{\sss, ji}(k)$ to agent~$j \in \Nout_i$.
Upon receiving $\ct_{\vv, ij}(k)$, agent~$i$ computes $\vv_i(k) = \Delta [ \Reconst( (\Dec(\sk_i, \ct_{\vv, ij}(k)))_{j \in \Nin_i} ) ]_q$, where $\Delta = \Delta_\ww \Delta_\xx$, and $[z]_q \coloneqq z - \lfloor \frac{z + q / 2}{q} \rfloor q$ is the minimal residue of $z$ modulo $q$.
The agent then updates its state as $\xx_i(k + 1) = \xx_i(k) + \ww_{ii} \xx_i(k) + \vv_i(k)$.
Note that although the resultant $\vv_i(k)$ includes a quantization error due to the rounding process, we ignore it in the following because it can be arbitrarily small by choosing sufficiently small $\Delta_\xx$.
Consequently, we obtain the proposition below.

\begin{proposition}
\label{prop:consensus}
    The outputs of \proref{pro:consensus} achieve $\lim_{n \to \infty} \xx_i(n) = \frac{1}{N} \sum_{j=1}^N \xx_{j,0}$ for all $i \in V$ if, for every $k = 0, \dots, n$, it holds that $| \bar{\vv}_{ij}(k) | < q / 2$ and $| \bar{\vv}_i(k) | < q / 2$ for all $i \in V$ and for all $j \in \Nin_i$, where $\bar{\vv}_i(k) = \sum_{j \in \Nin_i} \bar{\vv}_{ij}(k)$ and $\bar{\vv}_{ij}(k) = \bar{\ww}_{ij} \bar{\xx}_j(k)$.
\end{proposition}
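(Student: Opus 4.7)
The plan is to show that \proref{pro:consensus} correctly computes the encrypted version of the control law \eqref{eq:control} in every iteration, so that the plaintext dynamics of the agents coincides with \eqref{eq:agent} driven by \eqref{eq:control}; convergence to the initial average then follows directly from the classical result of \cite{Olfati-Saber2007-ny} invoked after \eqref{eq:control}. The argument has three ingredients that must be tracked through modular arithmetic: homomorphic correctness of the ciphertext built by agent~$i$, cancellation of the zero-sum shares $\sss_{ji}(k)$ upon reconstruction, and the lift from $\Z_q$ back to a signed integer via $[\,\cdot\,]_q$.

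First I would expand $\Dec(\sk_j, \ct_{\vv, ji}(k))$ using the additive-homomorphic properties recalled in \secref{sec:he_ss}. Since $\ct_{\ww, ji} = \Enc(\pk_j, \tilde{\ww}_{ji})$ and $\ct_{\sss, ji}(k) = \Enc(\pk_j, \sss_{ji}(k))$, applying $\odot$ and $\oplus$ and then decrypting yields $\tilde{\ww}_{ji}\,\tilde{\xx}_i(k) + \sss_{ji}(k) \bmod q$ for each $i \in \Nin_j$. Summing over $i \in \Nin_j$ (which is what $\Reconst$ does) and using that $(\sss_{ji}(k))_{i \in \Nin_j}$ is an output of $\Share(0, |\Nin_j|)$, so $\sum_{i \in \Nin_j} \sss_{ji}(k) \equiv 0 \pmod q$, reduces the reconstructed value to $\sum_{i \in \Nin_j} \tilde{\ww}_{ji}\,\tilde{\xx}_i(k) \bmod q$. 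Because $\tilde{\ww}_{ji} \equiv \bar{\ww}_{ji}$ and $\tilde{\xx}_i(k) \equiv \bar{\xx}_i(k) \pmod q$, this is exactly $\bar{\vv}_j(k) \bmod q$.

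Next I would invoke the hypotheses $|\bar{\vv}_{ij}(k)| < q/2$ and $|\bar{\vv}_j(k)| < q/2$ to justify the lift by $[\,\cdot\,]_q$: the former ensures no summand wraps around unrecoverably, and the latter ensures that $[\,\bar{\vv}_j(k) \bmod q\,]_q = \bar{\vv}_j(k)$. Multiplying by $\Delta = \Delta_\ww \Delta_\xx$ and using $\bar{\ww}_{ji} = \Delta_\ww^{-1} \ww_{ji}$ and $\bar{\xx}_i(k) = \lfloor \Delta_\xx^{-1} \xx_i(k) \rceil$ then gives $\vv_j(k) = \sum_{i \in \Nin_j} \ww_{ji}\,\xx_i(k)$ up to the quantization error introduced by $\lfloor\cdot\rceil$, which the statement tells us to neglect.

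Substituting this $\vv_i(k)$ into the update $\xx_i(k+1) = \xx_i(k) + \ww_{ii}\xx_i(k) + \vv_i(k)$ reproduces the closed-loop system \eqref{eq:agent}--\eqref{eq:control} exactly, whence the cited convergence result yields $\lim_{n\to\infty} \xx_i(n) = \frac{1}{N}\sum_{j=1}^N \xx_{j,0}$ under the standing assumption that $G$ is strongly connected and balanced. The only real subtlety I anticipate is the careful bookkeeping of sender/receiver indices in the ciphertexts $\ct_{\vv, ji}(k)$ together with the simultaneous mod-$q$ reductions of both $\ww_{ji}$ and $\xx_i(k)$; once the bound hypotheses are used to uniquely lift the mod-$q$ residue, the rest is a direct appeal to the classical consensus convergence.
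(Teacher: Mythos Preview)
Your proposal is correct and follows exactly the same approach as the paper's proof, which is a one-line argument stating that $\vv_i(k)$ in the protocol coincides with \eqref{eq:aggregation} by the homomorphism and correctness of the encryption and secret sharing. You have simply unpacked each of those ingredients (homomorphic decryption, cancellation of the zero-sum shares, and the signed lift $[\,\cdot\,]_q$ under the magnitude hypotheses) in explicit detail, after which the appeal to the classical convergence result is identical.
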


\begin{proof}
    The claim holds from that $\vv_i(k)$ in the protocol is equivalent to \eqref{eq:aggregation} due to the homomorphism and correctness of additively homomorphic encryption and secret sharing.
\end{proof}

\propref{prop:consensus} shows that \proref{pro:consensus} achieves average consensus when $| \bar{\vv}_{ij}(k) | < q / 2$ and $| \bar{\vv}_i(k) | < q / 2$ hold, where recall that $q$ can be chosen freely to satisfy the conditions.
However, the protocol guarantees nothing about whether the agents follow it faithfully.
Indeed, the protocol outputs deviate from an average value if an agent misreports its initial state or modifies its input.
This problem will be solved later based on mechanism design theory.

The rest of this section demonstrates the security of \proref{pro:consensus} under semi-honest adversaries.
In what follows, the supervisor is regarded as the $0$th party.
The following assumptions are also made to specify an attack scenario.

\begin{assumption}
\label{asm:mpc}
    Assume the following conditions.
    \begin{itemize}
        \item The supervisor does not collude with any agent.
        \item Every agent has more than two input neighbors.
        \item $n$, $G$, $\Delta_\ww$, $\Delta_\xx$, $\M$, $\C$, and $\{\pk_i\}_{i \in V}$ are public.
    \end{itemize}
\end{assumption}

Note that the first and second assumptions are necessary in our scenario.
If agent~$i$ colludes with the supervisor, it can identify $\tilde{\xx}_j(k)$ for all $j \in \Nin_i$ and for all $k \in \N_0$ because the supervisor has all shares $\sss_{ij}(k)$.
Additionally, if $| \Nin_i | = 1$, agent~$i$ can easily identify agent~$j$'s state as $\xx_j(k) = \ww_{ij}^{-1} \vv_i(k)$ from \eqref{eq:aggregation}.
With the assumption, the lemma below shows the security of \proref{pro:consensus} under semi-honest adversaries less than $\min_i |\Nin_i|$.

\begin{lemma}
\label{lem:consensus}
    Let $\xxsf_i = (\ww_{ii}, \{\ww_{ij}\}_{j \in \Nin_i}, \xx_{i,0}, \sk_i)$ and $\yysf_i = \{ \xx_i(k) \}_{k=0}^n$ for $i \in V$ and $n \in \N$.
    \proref{pro:consensus} $h$-privately computes functionality $(\Lambda, \yysf_1, \dots, \yysf_N) = F(\Lambda, \xxsf_1, \dots, \xxsf_N)$ in the presence of semi-honest adversaries under \asmref{asm:mpc}, where $h = \min_i |\Nin_i|$, and $\Lambda$ is the empty string.
\end{lemma}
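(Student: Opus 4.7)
The plan is to construct a PPT simulator $\Sim$ that, given $C$ with $|C|<h=\min_i|\Nin_i|$ and $\{\xxsf_i,\yysf_i\}_{i\in C}$, outputs a tuple computationally indistinguishable from the coalition's joint real view, via a hybrid argument combining semantic security of the additively homomorphic encryption with the perfect hiding of additive secret sharing. \asmref{asm:mpc} enters in two essential places: the no-collusion clause keeps $\sk_j$ outside the coalition's view for every $j\notin C$, while the bound $|C|<h$ guarantees $|C\cap\Nin_i|<|\Nin_i|$ for every $i$, so that at least one share in every per-aggregate secret sharing remains held by an honest input neighbor.

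For each $i\in C$, the simulator samples fresh coins $\tilde{\rrsf}_i$ and assembles $\tilde{\mmsf}_i$ layer by layer. In the offline layer, each $\ct_{\sss,ji}(k)$ with $j\in\Nout_i$ is simulated as $\Enc(\pk_j,\tilde{\sss}_{ji}(k))$: when $j\in C$ the plaintext $\tilde{\sss}_{ji}(k)$ is drawn uniformly from $\Z_q$, and when $j\notin C$ it is set to $0$. In the broadcast layer, each $\ct_{\ww,jl}$ is produced as $\Enc(\pk_j,\tilde{\ww}_{jl})$ if $j\in C$ (reusing the sampled coins) and as $\Enc(\pk_j,0)$ otherwise. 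In the online layer, the simulator first extracts $\tilde{\vv}_i(k)$ from $\yysf_i$ and $\xxsf_i$ via the local dynamics, then sets $\tilde{y}_{ij}(k)=\tilde{\ww}_{ij}\tilde{\xx}_j(k)+\tilde{\sss}_{ij}(k)\bmod q$ for $j\in\Nin_i\cap C$ (all summands available to the coalition), draws $\tilde{y}_{ij}(k)\in\Z_q$ uniformly for $j\in\Nin_i\setminus C$ subject to $\sum_{j\in\Nin_i}\tilde{y}_{ij}(k)=\tilde{\vv}_i(k)\bmod q$, and emits $\ct_{\vv,ij}(k)=\Enc(\pk_i,\tilde{y}_{ij}(k))$.

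Indistinguishability will be established through a sequence of hybrids. A first batch replaces, ciphertext by ciphertext, every encryption under $\pk_j$ with $j\notin C$ appearing in the coalition's view---namely the $\ct_{\sss,ji}(k)$ with $j\notin C$ and the broadcasts $\ct_{\ww,jl}$ with $j\notin C$---by $\Enc(\pk_j,0)$; each such step reduces directly to semantic security since $\sk_j$ lies outside the coalition. A second batch rewrites, for each $i\in C$ and $j\in\Nin_i\setminus C$, the online ciphertext $\ct_{\vv,ij}(k)$ from its homomorphic form to a fresh encryption of the same plaintext by rerandomization, and then swaps the plaintext for the simulator's $\tilde{y}_{ij}(k)$. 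The latter swap is a statistical equality, because any $|C\cap\Nin_i|\le|C|<|\Nin_i|$ shares of the sum-zero additive sharing $(\sss_{il}(k))_{l\in\Nin_i}$ are jointly uniform and independent of the secret, so the values $y_{ij}(k)=\tilde{\ww}_{ij}\tilde{\xx}_j(k)+\sss_{ij}(k)$ for $j\in\Nin_i\setminus C$ are, conditioned on the rest of the coalition's view, uniformly distributed on the affine hyperplane $\{\sum_j y_{ij}(k)=\tilde{\vv}_i(k)\bmod q\}$, matching the simulator's distribution exactly.

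The main obstacle will be a careful accounting of every leakage path through which the coalition can recover share data: direct decryption of $\ct_{\sss,ji}(k)$ whenever both the aggregate receiver $j$ and the share holder $i$ belong to $C$, and indirect recovery via $y_{ij}(k)-\tilde{\ww}_{ij}\tilde{\xx}_j(k)$ whenever $j\in C\cap\Nin_i$. Once every such leak is charged against the residual entropy of the corresponding share generation, the threshold $|C|<\min_i|\Nin_i|$ is exactly what remains to mask every unobserved $y_{ij}(k)$. The rerandomization step for homomorphically evaluated ciphertexts will be invoked as a standard property of the assumed additively homomorphic scheme (e.g., LWE-based), in line with prior encrypted-control work.
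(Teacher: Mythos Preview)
Your argument is correct and shares the paper's high-level strategy---construct a simulator and justify it by a hybrid that combines semantic security of the encryption with the uniformity of any proper subset of additive shares---but the execution differs in two substantive ways. First, the paper invokes Canetti's sequential-composition theorem to reduce to the single step $n=1$ and then builds the simulator only for that step; you instead handle all time indices $k$ directly inside one simulator. The composition route keeps the simulator small at the cost of an external dependency, while your direct treatment is self-contained but must carry the index $k$ throughout. Second, and more interestingly, the paper's simulator never uses the coalition's outputs $\yysf_i$: it simply samples the ciphertexts $\hat{\ct}_{\sss,ji}$, $\hat{\ct}_{\ww,ji}$ (for $j\notin C$), and $\hat{\ct}_{\vv,ij}$ (for $j\notin C$) uniformly from $\C$ and argues that the resulting decryptions match the real ones in distribution via secret-sharing randomness. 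You take the opposite tack: you extract $\tilde{\vv}_i(k)$ from $\yysf_i$, program the simulated plaintexts $\tilde{y}_{ij}(k)$ to sum to it, and then invoke rerandomization to pass from homomorphically-evaluated ciphertexts to fresh encryptions. Your route makes the consistency between the simulated view and the coalition's reconstructed aggregate explicit, at the price of assuming rerandomizability of the homomorphic scheme (which you correctly flag as standard for LWE-type schemes). One small omission: you do not separately treat the case $C=\{0\}$ (the supervisor alone), which the paper dispatches in one line since the supervisor receives no messages; this is trivial but should be stated for completeness under \asmref{asm:mpc}.
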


\begin{proof}
    Let $\Pi$ be \proref{pro:consensus}.
    Under \asmref{asm:mpc}, this proof constructs simulators $\Sim$ that satisfy the condition in \defref{def:smpc} for adversarial agents ($C \subset V$) and the supervisor ($C = \{0\}$) separately.
    From Corollary~2 in~\cite{Canetti2000-xq}, a sequential composition of protocols $\Pi_1, \dots, \Pi_T$, which respectively and privately compute functionalities $F_1, \dots, F_T$, privately computes a composition of the functionalities in the presence of semi-honest adversaries.
    Thus, the proof suffices only for $n = 1$ because $\Pi$ with any $n \in \N$ can be realized by sequentially repeating $\Pi$ with $n = 1$.
    
    \emph{Simulator for agents:}
    The view of agent~$i$ is given by $\view{\Pi}_i(\lambda, \Lambda, \xxsf_1, \dots, \xxsf_N) \!=\! (\xxsf_i, \rrsf_i, \mmsf_i)$, where $\rrsf_i \!=\! \{r_{ij}\}_{j \in \Nin_i}$ are seeds for random numbers used in the encryption of $\tilde{\ww}_{ij}$, and $\mmsf_i = (\mmsf_{i, \sss}, \mmsf_{i, \ww}, \mmsf_{i, \vv}) = (\{\ct_{\sss, ji}(0)\}_{j \in \Nout_i}, \allowbreak \{\ct_{\ww, ji}\}_{i \in V, j \in \Nout_i}, \{\ct_{\vv, ij}(0)\}_{j \in \Nin_i})$.
    Construct a simulator $\Sim$ as follows:
    1) Generate seeds $\hat{\rrsf}_i$ of the equal length as $\rrsf_i$ uniformly at random for all $i \in C$.
    2) Sample $\hat{\ct}_{\sss, ji}$ from $\C$ uniformly at random for all $i \in C$ and for all $j \in \Nout_i$.
    3) For all $i \in V$ and for all $j \in \Nout_i$, compute $\hat{\ct}_{\ww, ji} \allowbreak \gets \Enc(\pk_j, \tilde{\ww}_{ji})$ if $j \in C$; otherwise sample $\hat{\ct}_{\ww, ji}$ from $\C$ uniformly at random.
    4) For all $i \in C$ and for all $j \in \Nin_i$, compute $\hat{\ct}_{\vv, ij} \gets \Enc(\pk_i, \tilde{\ww}_{ij} \tilde{\xx}_{j,0} \bmod q) \oplus \hat{\ct}_{\sss, ij}$ if $j \in C$; otherwise sample $\hat{\ct}_{\vv, ij}$ from $\C$ uniformly at random.
    5) Let $\hat{\mmsf}_i \!=\! (\hat{\mmsf}_{i, \sss}, \hat{\mmsf}_{i, \ww}, \hat{\mmsf}_{i, \vv}) \!=\! (\{ \hat{\ct}_{\sss, ji} \}_{j \in \Nout_i}, \! \{ \hat{\ct}_{\ww, ji} \}_{i \in V, j \in \Nout_i}, \allowbreak \{ \hat{\ct}_{\vv, ij} \}_{j \in \Nin_i})$.
    Output $\{ (\xxsf_i, \hat{\rrsf}_i, \hat{\mmsf}_i) \mid i \in C \}$.

    By construction, it holds that $\Dec(\sk_i, \ct_{\vv, ij}(0)) = \tilde{\ww}_{ij} \tilde{\xx}_{j,0} + \sss_{ij}(0) \bmod q$ and $\Dec(\sk_i, \hat{\ct}_{\vv, ij}) = \tilde{\ww}_{ij} \tilde{\xx}_{j,0} + \hat{\sss}_{ij} \bmod q$ for all $i \in C$ and for all $j \in \Nin_i$, where $\hat{\sss}_{ij}$ is uniformly random over $\Z_q$.
    From the randomness of additive secret sharing, if $|C| < \min_i |\Nin_i|$, $\{ \{\hat{\sss}_{ji}\}_{j \in \Nout_i} \mid i \in C\}$ and $\{ \{\Dec(\sk_i, \hat{\ct}_{\vv, ij})\}_{j \in \Nin_i} \mid i \in C \}$ have the same distribution as $\{ \{\sss_{ji}(0)\}_{j \in \Nout_i} \mid i \in C\}$ and $\{ \{\Dec(\sk_i, \ct_{\vv, ij}(0))\}_{j \in \Nin_i} \mid i \in C \}$, respectively.
    Hence, semantic security of additively homomorphic encryption implies that $\{ (\hat{\mmsf}_{i, \sss}, \hat{\mmsf}_{i, \vv}) \mid i \in C \}$ and $\{ (\mmsf_{i, \sss}, \mmsf_{i, \vv}) \mid i \in C \}$ are computationally indistinguishable even given $\{ (\xxsf_i, \rrsf_i) \mid i \in C\}$.
    It also implies that $\{ \mmsf_{i, \ww} \mid i \in C \}$ and $\{ \hat{\mmsf}_{i, \ww} \mid i \in C \}$ are computationally indistinguishable even given $\{ (\xxsf_i, \rrsf_i) \mid i \in C\}$, and $\{ \mmsf_{i, \ww} \mid i \in C \}$ is conditionally independent of $\{ (\mmsf_{i, \sss}, \mmsf_{i, \vv}) \mid i \in C \}$ given $\{ \xxsf_i \mid i \in C\}$.
    Consequently, the condition in \defref{def:smpc} holds.

    \emph{Simulator for the supervisor:}
    The construction is obvious because the supervisor receives no message.
\end{proof}

Note that the supervisor in \lemref{lem:consensus} takes and outputs the empty string, which means that it gives no input and receives no output in the protocol.
This is because, to assist agents' computation, it just sends the encryption of shares in the offline phase.

\section{Distributed Mechanism for Privacy-Preserving Average Consensus}
\label{sec:mechanism}

In this section, we assume a rational adversary model instead of a semi-honest adversary model.

\begin{definition}
\label{def:adversary}
    Agent $i$ is a \emph{rational adversary} if it performs $\min_{s_i(\theta_i)} u_i(d(\theta), t(\theta); \theta_i)$ and attempts to learn information about other agents from one's view, where $s_i(\theta_i) = \{\sigma_{i, k} \mid k = 0, \dots, n - 1\}$, and $\sigma_{i, k}$ are outgoing messages that agent~$i$ sends to its output neighbors at time $k$.
\end{definition}

The rational adversaries formulated in the definition are allowed to cooperate with each other to learn the private information of honest agents.
Meanwhile, they are supposed to minimize their own costs individually.
This is a natural setting because, in practice, adversaries would have conflicting objectives (i.e., minimizing each cost), even if they agree to compromise the privacy of honest agents.

Our objective is to design a privacy-preserving protocol for providing a mechanism that implements a social choice function with decision rule $d(\theta) = (\frac{1}{N} \sum_{i = 1}^N \xx_{i,0}, \allowbreak \dots, \frac{1}{N} \sum_{i = 1}^N \xx_{i,0})$ under rational adversaries.
Here, computing average $\frac{1}{N} \sum_{i = 1}^N \xx_{i,0}$ is equivalent to minimizing $\sum_{i = 1}^N (\zz_i - \theta_i)^2$ for $(\zz_1, \dots, \zz_N)$ with $\xx_{i,0} = \theta_i$~\cite{Rabbat2005-bc}.
This fact suggests that the average consensus problem can be regarded as a mechanism design problem with individual costs $u_i(d(\theta), t(\theta); \theta_i) = (\zz_i - \theta_i)^2 + t_i(\theta)$, where $d(\theta) = (\zz_1, \dots, \zz_N)$.

We propose \proref{pro:mechanism} based on the above observation.
In the offline phase, the supervisor generates and sends $\{ \ct_{\trm, ji} \gets \Enc(\pk_j, \trm_{ji}) \}_{j \in V \setminus \{i\}}$ to each agent~$i \in V$, where $(\trm_{ij})_{j \in V \setminus \{i\}} \gets \Share(0, N - 1)$ for every $i \in V$.
Then, the protocol invokes \proref{pro:consensus} with $\xx_{i,0} = \theta_i$.
After the online phase of \proref{pro:consensus}, agent~$1$ encrypts its state as $\{ \ct_{\xx, i1} \gets \Enc(\pk_i, \tilde{\xx}_1(n)) \}_{i \in V}$ and broadcasts them to all agents, where $\tilde{\xx}_1(n) = \bar{\xx}_1(n) \bmod q$ is the encoded terminal state of agent~$1$.
Simultaneously, agent~$i$ broadcasts $\{ \ct_{v, ji} \gets \Enc(\pk_j, \tilde{v}_i) \oplus \ct_{\trm, ji} \}_{j \in V \setminus \{i\}}$ to all agents, where $\tilde{v}_i = \lfloor \Delta_\xx^{-1} (\xx_i(n) - \theta_i)^2 \rceil \bmod q$.
Then, agent~$i$ obtains the social decision as $d_i(\theta) = \Delta_\xx [ \Dec(\sk_i, \ct_{\xx, i1}) ]_q$ and $t_i(\theta) = \Delta_\xx [ \Reconst( (\Dec(\sk_i, \ct_{v, ij}))_{j \in V \setminus \{i\}} ) ]_q$.
Consequently, for a sufficiently large $n$, the social outcome is given as $d_i(\theta) = \xx_1(n) \approx \frac{1}{N} \sum_{j = 1}^N \theta_j$ with $t_i(\theta) = \sum_{j \ne i} (d_j(\theta) - \theta_j)^2$.

\begin{proposition}
\label{prop:mechanism}
    Suppose that, for every $k = 0, \dots, n$ and $n \in \N$, $| \bar{\vv}_{ij}(k) | < q / 2$, $| \bar{\vv}_i(k) | < q / 2$, $|\bar{\xx}_1(n)| < q / 2$, and $\sum_{j \ne i} \lfloor \Delta_\xx^{-1} (\xx_i(n) - \theta_i)^2 \rceil < q / 2$ hold for all $i \in V$ and for all $j \in \Nin_i$, where $\bar{\vv}_{ij}$ and $\bar{\vv}_i$ are as in \propref{prop:consensus}.
    Let $M^n = (g^n, \Sigma^n, s^n)$ be a mechanism provided by \proref{pro:mechanism}.
    The sequence of mechanisms $\{M^n\}_{n \in \N}$ asymptotically implements the social choice function $f = (d, t)$ given by $d(\theta) = ( d_1(\theta), \dots, d_N(\theta) )$, $d_i(\theta) = \frac{1}{N} \sum_{j = 1}^N \theta_j$, $t(\theta) = ( t_1(\theta), \dots, t_N(\theta) )$, and $t_i(\theta) = \sum_{j \ne i} (d_j(\theta) - \theta_j)^2$ in ex-post Nash equilibria.
\end{proposition}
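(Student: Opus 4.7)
The plan is to combine the convergence guarantee of \proref{pro:consensus} from \propref{prop:consensus} with the Groves-mechanism structure of the transfer rule and then follow the asymptotic-implementation strategy of~\cite{Tanaka2013-bl,Tanaka2017-zg}. The Groves form makes each agent's individual cost equal, in the consensus limit, to the social cost, so the consensus iteration simultaneously drives each agent toward its own optimum whenever it reports truthfully.

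First I would verify the convergence clause $\lim_{n\to\infty} g^n\circ s^n=f$. Under the stated bounds, no modular wraparound occurs in $\Z_q$, so the homomorphic properties of the encryption and the correctness of additive secret sharing (as already used in \lemref{lem:consensus}) imply that the decoded quantities under truthful execution are $d_i(\theta)=\xx_1(n)$ and $t_i(\theta)=\sum_{j\ne i}(\xx_j(n)-\theta_j)^2$ up to quantization error, which the choice of $\Delta_\xx$ makes negligible. \propref{prop:consensus} then yields $\xx_j(n)\to\bar\theta\coloneqq\frac{1}{N}\sum_k\theta_k$ for every $j\in V$, so $d_i(\theta)\to\bar\theta$ and $t_i(\theta)\to\sum_{j\ne i}(\bar\theta-\theta_j)^2$, matching $f$.

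Next I would establish the best-response property in the limit. Writing $u_i=(\xx_1(n)-\theta_i)^2+\sum_{j\ne i}(\xx_j(n)-\theta_j)^2$, I note that in the consensus limit all $\xx_j(n)$ coincide, so $u_i\to\sum_{j=1}^N(\bar\theta-\theta_j)^2$ under truthful play. By linearity of~\eqref{eq:control}, any deviated strategy $\sigma_i^n$ is asymptotically equivalent to agent~$i$ having reported some $\theta_i'\in\R$ in place of $\theta_i$, which shifts the limiting consensus to $\bar\theta'=\bar\theta+(\theta_i'-\theta_i)/N$; a direct expansion gives $\sum_j(\bar\theta'-\theta_j)^2=\sum_j(\bar\theta-\theta_j)^2+(\theta_i'-\theta_i)^2/N$, uniquely minimized at $\theta_i'=\theta_i$. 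Hence truthful reporting is the asymptotic best response.

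Finally I would convert this limiting optimality into the $\varepsilon$-inequality of \defref{def:aic}. Since~\eqref{eq:control} converges exponentially, the map from admissible deviated reports to $(\xx_1(n),\dots,\xx_N(n))$ converges uniformly over any bounded ball; choosing $n_0$ so that the finite-$n$ deviation of $u_i$ from its limit is at most $\varepsilon/2$ on both sides of the inequality yields the required bound. I expect the main obstacle to be the reduction of an arbitrary time-varying deviation — agent~$i$ is allowed by \defref{def:adversary} to inject a distinct encrypted value at every iteration and to every output neighbor — to a single effective one-shot misreport for the limiting analysis. This step is resolved in~\cite{Tanaka2013-bl,Tanaka2017-zg} by observing that, along any bounded deviation sequence, only its asymptotic component affects $\lim_{n\to\infty}\xx_j(n)$ while transient injections decay at the consensus mixing rate; the same argument transfers to the present setting because the additively homomorphic encryption layer faithfully preserves the underlying affine arithmetic on plaintexts.
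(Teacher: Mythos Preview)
Your proposal is correct and mirrors the paper's approach: verify that, under the stated no-overflow bounds, the decoded outputs satisfy $\hat d_i(\theta)=\xx_1(n)$ and $\hat t_i(\theta)=\sum_{j\ne i}(\xx_j(n)-\theta_j)^2$, invoke \propref{prop:consensus} to obtain $\lim_{n\to\infty}g^n\circ s^n=f$, and then appeal to the asymptotic-implementation result in~\cite{Tanaka2013-bl} (the paper cites it as Proposition~2 there) for the $\varepsilon$-best-response inequality. The only difference is presentational---you unpack the Groves-type argument and the reduction of time-varying deviations to an effective one-shot misreport, whereas the paper delegates this entirely to the cited reference.
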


\begin{proof}
    Let $\hat{d}(\theta) = (\hat{d}_1(\theta), \dots, \hat{d}_N(\theta))$ and $\hat{t}(\theta) = (\hat{t}_1(\theta), \dots, \hat{t}_N(\theta))$ be decision and transfer rules computed by the protocol.
    By construction, it follows that $\hat{d}_i(\theta) = \Delta_\xx [ \Dec(\sk_i, \ct_{\xx, i1}) ]_q = \xx_1(n)$ and $\hat{t}_i(\theta) = \Delta_\xx [ \Reconst( (\tilde{v}_j + \trm_{ij})_{j \in V \setminus \{i\}} ) ]_q = \sum_{j \ne i} (\xx_j(n) - \theta_j)^2$ for every $i \in V$.
    \propref{prop:consensus} implies $\lim_{n \to \infty} g^n \circ s^n = f$ because $\lim_{n \to \infty} \xx_i(n) = \frac{1}{N} \sum_{j = 1}^N \theta_j$ for all $i \in V$.
    Therefore, the claim follows from Proposition~2 in~\cite{Tanaka2013-bl}.
\end{proof}

\begin{figure}[!t]
    \begin{algorithm}[H]
        \caption{Privacy-preserving distributed mechanism}
        \label{pro:mechanism}
        \begin{algorithmic}[1]
            \Require $n$, $G$, $\Delta_\ww$, $\Delta_\xx$, $\ww_{ii}$, $\ww_{ij}$, $\theta_i$, $\pk_i$, $\sk_i$
            \Ensure $d_i(\theta)$, $t_i(\theta)$
            \State Send $\{ \ct_{\trm, ji} \}_{j \in V \setminus \{i\}}$ to agent~$i$ \Comment{Supervisor}
            \State Invoke \proref{pro:consensus} with $\xx_{i,0} = \theta_i$ for all $i \in V$
            \State Broadcast $\{ \ct_{\xx, i1} \}_{i \in V}$ to all agents \Comment{Agent~$1$}
            \State Broadcast $\{ \ct_{v, ji} \}_{j \in V \setminus \{i\}}$ to all agents \Comment{Agent~$i$}
            \State $d_i(\theta) \gets \Delta_\xx [ \Dec(\sk_i, \ct_{\xx, i1}) ]_q$
            \State $t_i(\theta) \gets \Delta_\xx [ \Reconst( (\Dec(\sk_i, \ct_{v, ij}))_{j \in V \setminus \{i\}} ) ]_q$
        \end{algorithmic}
    \end{algorithm}
    \vspace{-10mm}
\end{figure}

\propref{prop:mechanism} implies that, as $n \to \infty$, all agents report their types honestly, i.e., $\theta_i = \xx_{i,0}$, and follow \proref{pro:mechanism} faithfully.
Then, the behavior of rational adversaries in \proref{pro:mechanism} is equivalent to semi-honest adversaries.
The theorem below demonstrates the security of \proref{pro:mechanism} in the same manner as \lemref{lem:consensus}.

\begin{theorem}
\label{thm:mechanism}
    Let $\xxsf_i = (\ww_{ii}, \{\ww_{ij}\}_{j \in \Nin_i}, \theta_i, \sk_i)$ and $\yysf_i = (\{ \xx_i(k) \}_{k=0}^n, d_i(\theta), t_i(\theta))$ for all $i \in V$.
    \proref{pro:mechanism} $h$-privately computes functionality $(\Lambda, \yysf_1, \dots, \yysf_N) = F(\Lambda, \allowbreak \xxsf_1, \dots, \xxsf_N)$ in the presence of semi-honest adversaries under \asmref{asm:mpc}, where $h$ and $\Lambda$ are as in \lemref{lem:consensus}.
\end{theorem}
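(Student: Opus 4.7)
The plan is to parallel the proof of \lemref{lem:consensus} and bootstrap from it using sequential composition. I would first apply Corollary 2 of \cite{Canetti2000-xq} to decompose \proref{pro:mechanism} into (i) the embedded call to \proref{pro:consensus}, whose $h$-privacy is already established by \lemref{lem:consensus}, and (ii) the three-line mechanism postprocessing that distributes $\{\ct_{\trm, ji}\}$, broadcasts $\{\ct_{\xx, i1}\}$, and broadcasts $\{\ct_{v, ji}\}$. It then suffices to construct a simulator for the postprocessing stage alone, given inputs $\{\xxsf_i \mid i \in C\}$ and outputs $\{d_i(\theta), t_i(\theta) \mid i \in C\}$ that reproduces the coalition's view in that stage.

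For the coalition simulator, I would draw, independently for each $j \in C$, fresh simulated shares $\{\hat{\trm}_{j, i}\}_{i \in V \setminus \{j\}} \gets \Share(0, N - 1)$, and choose simulated mask values $\{\hat{\tilde{v}}_i\}_{i \notin C}$ uniformly over $\Z_q^{N - |C|}$ subject to the single sum constraint $\sum_{i \notin C} \hat{\tilde{v}}_i \equiv \tilde{t}_j(\theta) - \sum_{i \in C,\, i \neq j} \tilde{v}_i \pmod q$, which is well-defined because its right-hand side is independent of the choice of $j \in C$ by the consistency of real outputs (and the $\tilde{v}_i$ for $i \in C$ are computable by the simulator from $\xxsf_i$ and $\yysf_i$). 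The simulated ciphertexts would then be $\hat{\ct}_{\trm, ji} \gets \Enc(\pk_j, \hat{\trm}_{j, i})$ for $j \in C$ and sampled uniformly from $\C$ otherwise; $\hat{\ct}_{\xx, i1} \gets \Enc(\pk_i, \tilde{d}_i(\theta))$ for $i \in C$ and sampled uniformly from $\C$ otherwise; and $\hat{\ct}_{v, ji}$ constructed to decrypt under $\sk_j$ (when $j \in C$) to $\tilde{v}_i + \hat{\trm}_{j, i}$ (using the real $\tilde{v}_i$ if $i \in C$, and $\hat{\tilde{v}}_i$ otherwise), and sampled uniformly from $\C$ for $j \notin C$. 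The simulator for the supervisor is trivial as in \lemref{lem:consensus}, since in this stage the supervisor only sends messages and receives none.

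Indistinguishability rests on the same two pillars as in \lemref{lem:consensus}: the marginal uniformity of additive secret sharing on at most $N - 2$ shares (available because $|C| < h \le N - 1$ and the transfer sharing has size $N - 1$), and the semantic security of the additively homomorphic encryption, which makes ciphertexts under honest keys computationally indistinguishable from uniform samples in $\C$. The main obstacle is verifying that the simulated $\{\hat{\tilde{v}}_i + \hat{\trm}_{j, i}\}_{j \in C,\, i \notin C}$ match the real joint distribution, which is uniform on an affine subspace with one sum constraint per $j$, independent across $j$ but coupled by the shared $\tilde{v}_i$. The simulator respects this structure by construction: the convolution of two uniform distributions on affine subspaces with one sum constraint each yields a uniform distribution on the sum-constraint subspace, and using fresh $\hat{\trm}$-shares across $j$ while reusing the same $\hat{\tilde{v}}_i$ correctly captures the cross-$j$ correlation. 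Once this is in place, the remainder of the argument mirrors \lemref{lem:consensus} verbatim.
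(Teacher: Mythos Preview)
Your proposal is correct and follows essentially the same route as the paper: decompose \proref{pro:mechanism} via sequential composition into \proref{pro:consensus} (handled by \lemref{lem:consensus}) plus the mechanism postprocessing $\Pi'$, and build a simulator for $\Pi'$ from the corrupted parties' outputs using fresh $\trm$-shares and semantic security for ciphertexts under honest keys. The only cosmetic difference is that the paper draws, for each $j\in C$, an \emph{independent} sharing $(\tau_{ji})_{i\notin C}$ of $\Delta_\xx^{-1}t_j(\theta)-\sum_{\ell\in C\setminus\{j\}}\tilde v_\ell$ rather than a single $(\hat{\tilde v}_i)_{i\notin C}$ reused across $j$; your observation that the target sum is $j$-independent and your handling of the cross-$j$ coupling are exactly what makes the two constructions yield the same view distribution.
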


\begin{proof}
    Let $\Pi$ be \proref{pro:mechanism}, and $\Pi'$ be a protocol excluding line~2 from $\Pi$.
    The claim follows from Corollary~2 in~\cite{Canetti2000-xq} and \lemref{lem:consensus} by compositing \proref{pro:consensus} and $\Pi'$ if $\Pi'$ $h$-privately computes $F$ in the presence of semi-honest adversaries.
    This proof constructs simulators $\Sim$ for $\Pi'$.
    
    \emph{Simulator for agents:}
    The view of agent~$i$ is given by $\view{\Pi'}_i(\lambda, \Lambda, \xxsf_1, \dots, \xxsf_N) = (\xxsf_i, \rrsf_i, \mmsf_i)$, where $\rrsf_i = \{r_{ij}\}_{j \in V \setminus \{i\}}$ for $i \ne 1$, $\rrsf_1 =  (\{r_\ell\}_{\ell \in V}, \{r_{1j}\}_{j \in V \setminus \{1\}})$, and $\mmsf_i = (\mmsf_{i, \trm}, \mmsf_\xx, \mmsf_v) = (\{\ct_{\trm, ji}\}_{j \in V \setminus\{i\}}, \{\ct_{\xx, \ell 1}\}_{\ell \in V}, \allowbreak \{\ct_{v, j \ell}\}_{\ell \in V, j \in V \setminus \{\ell\}}\})$.
    $\{r_\ell\}_{\ell \in V}$ and $\{r_{ij}\}_{j \in V \setminus \{i\}}$ are seeds for random numbers used in the encryption of $\tilde{\xx}_1(n)$ and $\tilde{v}_i$, respectively.
    Construct a simulator $\Sim$ as follows:
    1) Generate seeds $\hat{\rrsf}_i$ of the equal length as $\rrsf_i$ uniformly at random for all $i \in C$.
    2) Compute $\hat{\ct}_{\trm, ji} \gets \Enc(\pk_j, \hat{\trm}_{ji})$ with $(\hat{\trm}_{ij})_{j \in V \setminus \{i\}} \gets \Share(0, N - 1)$ for all $i \in V$.
    3) For all $\ell \in V$, compute $\hat{\ct}_{\xx, \ell 1} \gets \Enc(\pk_\ell, \Delta_\xx^{-1} d_i(\theta) \bmod q)$ with some $i \in C$.
    4) For all $i \in V$ and for all $j \in V \setminus \{i\}$, compute $\hat{\ct}_{v, ji} \gets \Enc(\pk_j, \allowbreak \tau_{ji}) \oplus \hat{\ct}_{\trm, ji}$, where $\tau_{ji} = \tilde{v}_i$ if $i \in C$, $\tau_{ji} = 0$ if $i \in V \setminus C$ and $j \in (V \setminus \{i\}) \setminus C$, and $(\tau_{ji})_{i \in V \setminus C} \gets \Share(\Delta_\xx^{-1} t_j(\theta) - \sum_{\ell \in C \setminus \{j\}} \tilde{v}_\ell, | V \setminus C |)$ if $i \in V \setminus C$ and $j \in C$.
    5) Let $\hat{\mmsf}_i = (\hat{\mmsf}_{i, \trm}, \hat{\mmsf}_\xx, \hat{\mmsf}_v) = (\{\hat{\ct}_{\trm, ji}\}_{j \in V \setminus \{i\}}, \{\hat{\ct}_{\xx, \ell 1}\}_{\ell \in V}, \allowbreak \{\hat{\ct}_{v, \ell j}\}_{\ell \in V, j \in V \setminus \{\ell\}}\})$.
    6) Output $\{ (\xxsf_i, \hat{\rrsf}_i, \hat{\mmsf}_i) \mid i \in C \}$.

    By construction, $\{ (\mmsf_{i, \trm}, \mmsf_\xx) \!\mid\! i \in C \}$ and $\{ (\hat{\mmsf}_{i, \trm}, \hat{\mmsf}_\xx) \!\mid\! i \in C \}$ have the same distribution.
    For all $i \in V$ and for all $j \in V \setminus \{i\}$, $\ct_{v, ji}$ is computationally indistinguishable from $\hat{\ct}_{v, ji}$ due to semantic security.
    Furthermore, for all $i \in C$, it holds that $\Reconst( (\Dec(\sk_i, \hat{\ct}_{v, ij}) )_{j \in V \setminus \{i\}} ) = \sum_{j \in C \setminus \{i\}} \tau_{ij} + \sum_{j \in V \setminus C} \tau_{ij} = (\sum_{j \in C \setminus \{i\}} \tilde{v}_j) + (\Delta_\xx^{-1} t_i(\theta) - \sum_{\ell \in C \setminus \{j\}} \tilde{v}_\ell) = \Delta_\xx^{-1} t_i(\theta)$, which means that $\{ \mmsf_v \mid i \in C \}$ and $\{ \hat{\mmsf}_v \mid i \in C \}$ are computationally indistinguishable even given $\{ (\xxsf_i, \rrsf_i) \mid i \in C \}$.
    Therefore, the condition in \defref{def:smpc} holds.

    \emph{Simulator for the supervisor:}
    The construction is obvious because the supervisor receives no message.
\end{proof}

Combining with \propref{prop:mechanism} and \thmref{thm:mechanism}, as $n \to \infty$, the security of \proref{pro:mechanism} is guaranteed in the sense of \defref{def:smpc} with $h = \min_i |\Nin_i|$ under \asmref{asm:mpc} even for rational adversaries.
Note that, for finite $n$, rational adversaries are not equivalent to semi-honest ones, and then they might not completely follow the proposed protocol.
However, according to \defref{def:aic} and \propref{prop:mechanism}, the decrease of adversaries' costs by deviating from the protocol is bounded by any small value if $n$ is sufficiently large.
In this light, their behavior can be made arbitrarily close to semi-honest ones by choosing large $n$.

\begin{remark}
    A mechanism is \emph{(weakly) budget balanced} if $\sum_{i = 1}^N t_i(\theta) = 0$.
    Moreover, it is \emph{individually rational} if $u_i(d(\theta), t(\theta); \theta_i) \le 0$ for every $i \in V$.
    The mechanism provided by \proref{pro:mechanism} is neither budget balanced nor individually rational, although it is asymptotically incentive compatible.
    Further development of the proposed protocol to satisfy the properties is future work. 
\end{remark}

\begin{remark}
    The supervisor must verify that all agents pay $t_i(\theta)$ correctly after executing \proref{pro:mechanism}.
    This is not straightforward because the supervisor does not know the exact values of $t_i(\theta)$ due to encryption.
    Nevertheless, the value of a sum of $t_i(\theta)$ can be verified without compromising privacy as follows.
    Suppose $t_i'(\theta)$ is the value that agent~$i$ actually paid.
    Let $v_i(\theta) = (\xx_i(n) - \theta_i)^2$ and $u_i(\theta) =  v_i(\theta) + t_i(\theta)$.
    It follows that $u_i(\theta) = \sum_{j=1}^N v_j(\theta)$ and $\sum_{i=1}^N t_i(\theta) = (N - 1) \sum_{j=1}^N v_j(\theta)$.
    Combining these equations, we obtain $\sum_{i=1}^N t_i(\theta) = (N - 1) u_i(\theta)$.
    Therefore, the supervisor can check whether the value of $\sum_{i=1}^N t_i'(\theta)$ is correct by asking each agent if $(N - 1)^{-1} \sum_{i=1}^N t_i'(\theta)$ is equal to $u_i(\theta)$.
\end{remark}

\section{Numerical Examples}
\label{sec:examples}

This section presents numerical examples using the LWE encryption.
We used ECLib~\cite{eclib} and lattice-estimator~\cite{Albrecht2015-pm} to implement the proposed protocols and the encryption scheme with $\lambda = 128$~bit security.

Let $n = 30$, $N = 5$, $\Delta_\ww = 0.1$, $\Delta_\xx = 0.01$, $\epsilon = 0.1$, and $A = [0 \ 0 \ 0 \ 1 \ 1; \ 1 \ 0 \ 0 \ 1 \ 1; \ 0 \ 1 \ 0 \ 1 \ 0; \ 0 \ 1 \ 0 \ 0 \ 1; \ 0 \ 1 \ 1 \ 0 \ 0]$.
\figref{fig:example}\subref{fig:attack_free} depicts the state trajectories of the agents during the execution of \proref{pro:consensus} and \proref{pro:mechanism} with $(\theta_1, \dots, \theta_5) = (3, 2, 1, 0, -1)$.
The cost of agent~$2$ was $u_2(d(\theta), t(\theta); \theta_2) = 10.65$, where $v_2(d(\theta); \theta_2) = 1.85$ and $t_2(\theta) = 8.80$.
\figref{fig:example}\subref{fig:attacked} shows the state trajectories when agent~$2$ stayed in the same state (i.e., $\xx_2(k) = 2$) by violating \proref{pro:consensus}.
In that case, although $v_2$ was reduced to $0$, $t_2$ was increased to $14.43$, thereby increasing $u_2$ to $14.43$.
This implies that the agent would never behave in such a manner as long as it is rational.

\begin{figure}[t]
    \vspace{-2mm}
    \centering
    \subfigure[No agent deviated.]{\includegraphics[scale=.5]{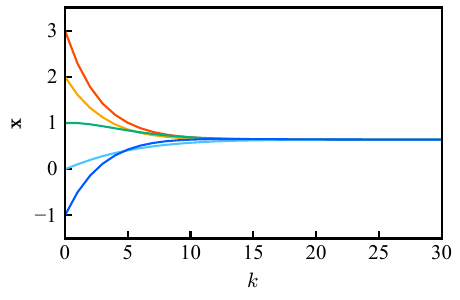}\label{fig:attack_free}}\hspace{3mm}%
    \subfigure[Agent~$2$ deviated.]{\includegraphics[scale=.5]{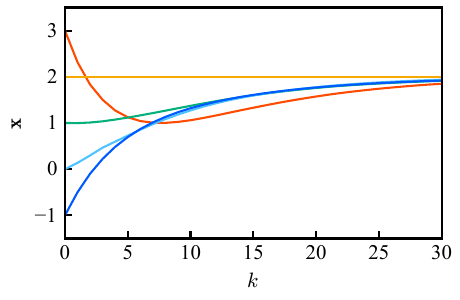}\label{fig:attacked}}
    \caption{State trajectories during the execution of the proposed protocols.}
    \label{fig:example}
    \vspace{-6mm}
\end{figure}

\section{Conclusions}
\label{sec:conclusions}

We proposed a privacy-preserving protocol to solve average consensus problems for rational and strategic agents.
The proposed protocol provides a distributed mechanism that incentivizes such agents to implement intended behavior faithfully and protects the privacy of agents using additively homomorphic encryption and additive secret sharing.
Combining the mechanism and cryptographic primitives, the proposed protocol fulfills security under rational adversaries rather than semi-honest adversaries.
The results of this study will be generalized to other cooperative control tasks.

\bibliographystyle{IEEEtran}
\bibliography{reference}

\end{document}